\newtheorem{theorem}{Theorem}
\newtheorem{remark}{Remark}
\title{\LARGE \bf
Sparse Linear Regression with Constraints: A Flexible Entropy-based Framework
}
\author{Amber Srivastava, Alisina Bayati, Srinivasa M. Salapaka
\thanks{\textbf{This work has been submitted to the IEEE for possible publication. Copyright may be transferred without notice, after which this version may no longer be accessible.}}
\thanks{A. Srivastava and A. Bayati contributed equally to this work.}%
\thanks{A. Srivastava is with the Department of Mechanical Engineering, Indian Institute of Technology Delhi, India.
        {\tt\small asrvstv@iitd.ac.in}}%
\thanks{A. Bayati and S.M. Salapaka are with the Coordinated Science Laboratory, University of Illinois Urbana Champaign, USA.
        {\tt\small abayati2/salapaka@illinois.edu}}
}
\begin{document}

\maketitle
\thispagestyle{empty}
\pagestyle{empty}

\begin{abstract}
This work presents a new approach to solve the sparse linear regression problem, i.e., to determine a $k$-sparse vector $\mathbf{w}\in\mathbb{R}^d$ that minimizes the cost $\|\mathbf{y}-A\mathbf{w}\|_2^2$. In contrast to the existing methods, our proposed approach splits this $k$-sparse vector into two parts --- (a) a column stochastic binary matrix $V$, and (b) a vector $\mathbf{x}\in\mathbb{R}^k$. Here, the binary matrix $V$ encodes the location of the $k$ non-zero entries in $\mathbf{w}$. Equivalently, it encodes the  subset of $k$ columns in the matrix $A$ that map $\mathbf{w}$ to $\mathbf{y}$. We demonstrate that this enables modeling several non-trivial application specific structural constraints on $\mathbf{w}$ as constraints on $V$. The vector $\mathbf{x}$ comprises of the actual non-zero values in $\mathbf{w}$. We use Maximum Entropy Principle (MEP) to solve the resulting optimization problem. In particular, we ascribe a probability distribution to the set of all feasible binary matrices $V$, and iteratively determine this distribution and the vector $\mathbf{x}$ such that the associated Shannon entropy gets minimized, and the regression cost attains a pre-specified value. The resulting algorithm employs homotopy from the convex entropy function to the non-convex cost function to avoid poor local minimum. We demonstrate the efficacy and flexibility of our proposed approach in incorporating a variety of practical constraints, that are otherwise difficult to model using the existing benchmark methods.
\end{abstract}

\section{Introduction}\label{sec: Introduction}
Sparse solutions to the linear regression problems have been of interest to multiple fields such as signal and image processing \cite{wen2018survey}, genomics \cite{o2021sparse}, economics and finance \cite{fan2011sparse}, flight load prediction \cite{zhu2019scaled}, machine learning \cite{ChristosL0Sparse}, and remote sensing \cite{tuia2016nonconvex}  
. One of the fundamental ways to formulate this problem is the best subset selection problem, where given a matrix $A=[ a_1~ a_2~\hdots a_d]\in\mathbb{R}^{n\times d}$, a measurement vector $\mathbf{y}\in\mathbb{R}^n$, and a sparsity level $k (\ll d)$, we solve
\begin{align}\label{eq: Sparse_def}
\min_{\mathbf{w}\in\mathbb{R}^d}  \|\mathbf{y}-A\mathbf{w}\|^2_2, \quad \text{subject to } \|\mathbf{w}\|_0 \leq k,
\end{align}
where $\|\mathbf{w}\|_0$ is the number of non-zero entries in $\mathbf{w}$. In other words, the optimization problem (\ref{eq: Sparse_def}) determines (a) the best subset of $k$-columns (features) out of the $d$ columns $\{ a_1,\hdots, a_d\}$ in $A$, and (b) their corresponding coefficients such that the vector $\mathbf{w}$ linearly maps to the measurement vector $\mathbf{y}$ with minimum squared euclidean loss in (\ref{eq: Sparse_def}).

Note that the $\|\cdot\|_0$ norm is non-convex and the optimization problem (\ref{eq: Sparse_def}) is NP-hard \cite{welch1982algorithmic}. A large number of work done in this area design approximate solutions to (\ref{eq: Sparse_def}), where they iteratively add or remove the non-zero coefficients in $w$ to minimze (\ref{eq: Sparse_def}); for instance matching pursuit \cite{zhang2011sparse,donoho2012sparse} and forward-backward approaches \cite{zhang2008adaptive,shekhar2022forward}. See \cite{zhang2015survey} for a detailed survey. Other relevant approaches involve replacing the non-convex constraint in (\ref{eq: Sparse_def}) with a sparsity promoting term $\mathcal{T}(\mathbf{w})$ in the objective function, and solving the optimization problem
\begin{align}\label{eq: Reg_version}
\min_{\mathbf{w}\in\mathbb{R}^d} \|\mathbf{y}-A\mathbf{w}\|_2^2 + \lambda \mathcal{T}(\mathbf{w}),
\end{align}
where $\lambda$ is a regularization parameter. A popular choice for $\mathcal{T}(\mathbf{w})$ is the $l_1$ norm $\|\mathbf{w}\|_1$, which results into a convex optimization problem (\ref{eq: Reg_version}). Algorithms such as gradient projection \cite{figueiredo2007gradient}, iterative shrinkage-thresholding \cite{beck2009fast}, and (linear) alternating direction method \cite{yang2013linearized} guarantee globally optimal solutions, and several heuristics such as orthogonal matching pursuit \cite{davis1997adaptive} and least angle regression \cite{LARS} efficiently address this convex program.

Another class of choice for $\mathcal{T}(\mathbf{w})$ that has received much attention lately are the non-convex regularizers, which have been shown to result into better solutions than their convex counterpart \cite{amir2021trimmed}. Some of these choices are $\|\mathbf{w}\|_p$ with $0<p<1$ \cite{wang2021non}, minimax concave penalty (MCP) \cite{9585422}, smoothly clipped absolute deviation (SCAD) \cite{SCAD}, and trimmed lasso \cite{amir2021trimmed,bertsimas2017trimmed}. The latter has the additional property that it {\em exactly} results into a sparsity level $k$ as indicated in the optimization problem (\ref{eq: Sparse_def}). 

As is evident from above, the work done in this area is extensive, with several different proposed frameworks that address various aspects of the problem such as scalability, computational costs, bias and exactness of sparsity. See \cite{zhang2015survey} for a survey on these methods. Various scenarios such as (overlapping) grouped variables \cite{wang2008note,yuan2011efficient}, shape constraints \cite{micchelli2013regularizers} and restricted non-zero values \cite{baldassarre2012incorporating} impose additional constraints on the design of the sparse vector $\mathbf{w}$ in (\ref{eq: Sparse_def}). Though there are methods to address such specific structural constraints, there is, to the best of our understanding, limited work on a generalized framework that effectively {\em models} and incorporates such constraints in (\ref{eq: Sparse_def}).  

The sparsity and structural constraints on $\mathbf{w}$ can alternatively be viewed as constraints on the selection of the feature vectors $\{ a_j\}_{j=1}^d$ from the matrix $A$. Thus, a {\em direct} control over the selection of these feature vectors will provide flexibility in modeling a variety of structural constraints discussed above, and also in enforcing the sparsity level $k$ of the vector $\mathbf{w}$ (which, generally speaking, is also a structural constraint). To this end, we develop a framework that (a) provides a direct control over the desired level of sparsity in the vector $\mathbf{w}$, (b) is flexible to incorporate a wide-range of application specific structural constraints on $\mathbf{w}$, and (c) results into an algorithm that is designed to avoid poor local minima of the underlying non-convex optimization problem.

The above contribution (a) result from our viewpoint of the optimization problem (\ref{eq: Sparse_def}), where we dissociate the $k$-sparse vector $\mathbf{w}\in \mathbb{R}^d$ into two parts --- a binary column stochastic matrix $V\in\{0,1\}^{d\times k}$ and a vector $\mathbf{x}\in\mathbb{R}^k$. The matrix $V$ is designed to encode the location of the $k$ non-zero values in $\mathbf{w}$. Equivalently, it directly controls subset of $k$ columns (features) in the design matrix $A$, that map the non-zero entries in $\mathbf{w}$ to the measurement $\mathbf{y}$. The vector $\mathbf{x}\in\mathbb{R}^k$ comprises of these non-zero values in $\mathbf{w}\in\mathbb{R}^d$. As elaborated in the Section \ref{sec: PF_sparsity}, the column stochasticity and the size of the binary matrix $V$, and the size of the vector $\mathbf{x}$ guarantee that the desired level $k$ of sparsity in $\mathbf{w}$ is {\em exactly} achieved.

The contribution (b) also results from the decision matrix $V$. Since the matrix $V$ governs the choice of the $k$ features in the design matrix $A$, it explicitly enables modeling several structural constraints that restrict the {\em permissible choice of subsets of $k$ features} in $A$. For example, (as demonstrated later) constraints such as selecting only 2 out of the 4 given features, not allowing all features in a given subset $\{ a_{l_1}, a_{l_2}, a_{l_3}\}$ to be selected, or modeling existing constraints such as selecting pre-defined groups of features (popularly addressed using group lasso \cite{jacob2009group}) can be conveniently modelled as structural constraints on $V$. As far as we are aware, our proposed framework is the most flexible in incorporating such variety of constraints on the permitted choice of the features; primarily owing to the introduction of the matrix parameter $V$ in our model that explicitly determines the choice of the features in $A$.

The contribution (c) results from the use of Maximum Entropy Principle (MEP) in determining the matrix $V$ and the vector $\mathbf{x}$. Note that the matrix $V$ is a discrete decision variable that lies in a {\em combinatorially} large set $\mathcal{V}$ of all possible binary column stochastic matrices. Thus, the sparse linear regression (SLR) problem, with $V$ and $\mathbf{x}$ as the decision variable, can be viewed as a combinatorial optimization problem. In the past, MEP-based frameworks have successfully addressed a variety of such problems; for instance the facility location problem \cite{rose1998deterministic}, data aggregation \cite{xu2014aggregation}, vehicle routing \cite{baranwal2022unified}, network design \cite{9517030}, protein structure alignment \cite{chen2005protein}, and image processing \cite{yu2013maximal}. The abstract idea behind all these frameworks is to consider a distribution over the set of all possible values of the discrete variable. Then, determine the distribution that maximizes the Shannon entropy \cite{jaynes2003probability} at a pre-specified value of the  expected cost function. This results into an iterative process, wherein the pre-specified value is successively lowered to as small value as possible and the solution from the previous iteration forms an initialization for the next. These iterations mimic a homotopy from the convex entropy to the non-convex cost function, which prevents the algorithm from getting stuck in a poor local minima \cite{9517030}. As described later in Section \ref{sec: PS_sparsity}, instead of considering the distribution over the combinatorially large set $\mathcal{V}$, we introduce auxiliary distributions over the individual entries $v_{ij}$ in $V$ (at the cost of an additional constraint); thus, making the resulting optimization problem computationally tractable.

We observe that the proposed MEP-based framework performs as good as the recent trimmed lasso method on the unconstrained optimization problems \cite{amir2021trimmed}, and outperforms the convex regularization based methods such as lasso, ridge regression, LARS, and adaptive lasso. We demonstrate the frameworks flexibility in handling various practical constraints (as discussed above). We also illustrate and analyze the characteristic features of the MEP-based framework such as annealing and the phase transitions, and their utility towards increasing computational efficiency and determining the choice of sparsity level $k$ in the SLR (\ref{eq: Sparse_def}).

\section{Maximum Entropy Principle}\label{sec: MEP_overview}
The Maximum Entropy Principle (MEP), introduced by Jaynes \cite{jaynes1957information}, plays a pivotal role in shaping our proposed methodology, and we present a brief elucidation herein. Let us consider a random variable $\mathcal{X}$ characterized by realizations within the set $\{x_1, \ldots, x_n\}$. MEP states that, when armed with a priori information regarding $\mathcal{X}$, the most unbiased probability distribution is one that maximizes the Shannon entropy.
More precisely, if this a priori knowledge is articulated as constraints on the expected values of functions $f_k: \mathcal{X} \rightarrow \mathbb{R}$,
\begin{align}\label{eq: MEP_inf}
\sum_{i = 1}^n p_{\mathcal{X}} (x_i) f_k(x_i) = F_k \quad \forall 1 \leq k \leq m,
\end{align}
where $F_k$ denotes known constants, MEP dictates that the most unbiased probability distribution for $\mathcal{X}$ is obtained by solving the optimization problem:
\begin{align}\label{eq: MEP_optimization}
\begin{split}
    \max_{\{p_{\mathcal{X}}(x_i)\}} \quad &\mathcal{H}(\mathcal{X}) = - p_{\mathcal{X}}(x_i) \log (p_{\mathcal{X}}(x_i))\\
    \text{subject to: } &\sum_{i = 1}^n p_{\mathcal{X}} (x_i) f_k(x_i) = F_k \quad \forall \; 1 \leq k \leq m,
\end{split}
\end{align}
The optimization process leads to the determination of the Gibbs distribution as the optimal probability distribution, striking a balance between unbiasedness and adherence to the apriori information.

\section{Problem Formulation}\label{sec: PF_sparsity}
As briefly stated in the Section \ref{sec: Introduction}, we begin by re-writing the sparse vector $\mathbf{w}$ as a product of a matrix $V$ and a vector $\mathbf{x}$, i.e. $\mathbf{w}=V\mathbf{x}$, where the matrix $V$ lies in the set 
\begin{align}\label{eq: SetV}
\mathcal{V}:=\{V=(v_{ij})\in\{0,1\}^{d\times k}: \sum_i v_{ij}=1~\forall~j\}
\end{align} 
and $x\in\mathbb{R}^k$. Note that the number of columns in the matrix $V$, the size of the vector $\mathbf{x}$, binary entries in $V$, and column-stochasticity of $V$ (i.e., $\sum_{i}v_{ij}=1~\forall~j$) ensure that maximum number of non-zero elements in $\mathbf{w}=V\mathbf{x}$ are exactly $k$. For instance, let $V\in\mathcal{V}$ be such that $v_{rs}=1$, then the $r$-th position in $\mathbf{w}$ is non-zero and is taken up by the $s$-th entry of $x$. Further, the column stochasticity of $V$ ensures that the $s$-th entry of $\mathbf{x}\in\mathbb{R}^k$ does not appear at any other location in $\mathbf{w}$ --- thereby, guaranteeing $k$ non-zero values in $\mathbf{w}$. The fact that the vector $\mathbf{x}$ lies in $\mathbb{R}^k$, and that the matrix $V$ lies in the set $\mathcal{V}$ in (\ref{eq: SetV}) together are equivalent to the sparsity constraint $\|\mathbf{w}\|_0\leq k$. Thus, we re-write the sparse linear regression (SLR) problem in (\ref{eq: Sparse_def}) using the above notations as
\begin{align} \label{eq: Sparse_def2}
    \min_{\mathbf{x}\in\mathbb{R}^k, V\in\mathcal{V}}  \|\mathbf{y}-A V \mathbf{x}\|^2_2.
\end{align} 

\section{MEP-based framework for Problem Solution}\label{sec: PS_sparsity}
To make the optimization problem in (\ref{eq: Sparse_def2}) amenable to be addressed within an MEP-based framework, we reformulate it as
\begin{subequations}\label{eq: Sparse_def3}
\begin{align}
\min_{\substack{\mathbf{x}\in\mathbb{R}^k,\\ \{\eta(V|\mathbf{x})\}}}&~\sum_{V\in\mathcal{V}}\eta(V|\mathbf{x}) \|\mathbf{y}-A V \mathbf{x}\|^2_2\\
\text{subject to}& ~\eta(V|\mathbf{x})\in\{0,1\}, \text{ and } \sum_{V\in\mathcal{V}}\eta(V|\mathbf{x}) = 1,
\end{align}
\end{subequations}
where $\eta(V|\mathbf{x})$ is an auxiliary binary decision variable that determines the matrix $V$. We then replace $\eta(V|\mathbf{x})\in\{0,1\}$ by the soft decision variable $p(V|\mathbf{x})\in[0,1]$, resulting into a relaxed regression cost
\begin{align}\label{eq: RelaxCost}
D:=\sum_{V\in\mathcal{V}}p(V|\mathbf{x})\|\mathbf{y}-A V\mathbf{x}\|_2^2.
\end{align}
Note that $p(\cdot|\mathbf{x})$ can also be interpreted as the discrete distribution over the space of all the matrices $V\in\mathcal{V}$ given $\mathbf{x}$. We use MEP to design this distribution $\{p(V|\mathbf{x})\}$ as well as to determine the vector $\mathbf{x}$. In particular, their design is based on the principle of maximizing the Shannon entropy $H$ subject to the constraint that the expected cost function $D$ in (\ref{eq: RelaxCost}) attains a pre-determined value $c_0$. This leads to the following optimization problem.
\begin{subequations}\label{eq: MEP_sparse}
\begin{align} \label{eq: MEP_sparse_obj}
    \max_{\substack{\{p(V|\mathbf{x})\}\\ \mathbf{x}\in\mathbb{R}^k}}H&:= -\sum_{V}p(V|\mathbf{x})\log(p(V|\mathbf{x}))\\ 
    \text{subject to }& D:=\sum_{V}p(V|\mathbf{x})\|\mathbf{y}-A V\mathbf{x}\|_2^2 = c_0, \label{eq: MEP_sparse_c1}\\ 
    & \sum_{V}p(V|\mathbf{x}) = 1, ~V\in\mathcal{V}.\label{eq: MEP_sparse_c2} 
\end{align}
\end{subequations}
Since ${|\mathcal{V}|=d^k}$, the resulting decision variable space $\{p(V|\mathbf{x})\}$ is exponentially large; thus, making the optimization problem (\ref{eq: MEP_sparse}) intractable in its current form. We trim down the decision variable space to polynomial order by dissociating the decision variable $p(V|\mathbf{x})$ as  
\begin{align}\label{eq: eta_break}
p(V|\mathbf{x}) = \prod_{i,j=1}^{d,k}p_{ij}(v_{ij}|\mathbf{x}),
\end{align}
where $p_{ij}(\cdot|\mathbf{x})$ is distribution over all possible values $v_{ij}\in\{0,1\}$, the $ij$-th entry in $V$, takes. The new decision variable space $\{\{p_{ij}(v_{ij}|\mathbf{x})\},x\}$ is now of the polynomial order $\mathcal{O}(dk)$, which takes us closer to posing the optimization problem (\ref{eq: MEP_sparse}) in a computationally  tractable way. Substituting (\ref{eq: eta_break}) in the objective (\ref{eq: MEP_sparse_obj}) we obtain
\begin{equation}
\begin{aligned}\label{eq: Simplif_obj}
    \mathcal{H} := \mathbf{1}_d^{\top}[Q\circ\log Q+\bar{Q}\circ\log \bar{Q}]\mathbf{1}_k,
\end{aligned}
\end{equation}
where $\circ$ denotes element wise operation, $Q\in[0,1]^{d\times k}$ and $\bar Q\in[0,1]^{d\times k}$ are defined as
\begin{equation}
\begin{aligned}\label{eq: Q}
    Q:=(q_{ij}), \; q_{ij}:=p_{ij}(v_{ij}=1|\mathbf{x}), \; \bar{Q}=\mathbf{1}_{d\times k}-Q.
\end{aligned}
\end{equation}
Please see Appendix \ref{app: Lin_obj_simplif} for details on the above (as well as the following) algebraic simplifications. The constraint (\ref{eq: MEP_sparse_c1}), in terms of $Q$ and $\bar{Q}$, transforms into
\begin{equation}
    \begin{aligned}\label{eq: Simplif_const}
    \mathcal{D}:=&\|\mathbf{y}-A Q\mathbf{x}\|^2_2\\
    &+ [ a_1^{\top} a_1~\hdots~  a_d^{\top} a_d][Q\circ\bar{Q}](\mathbf{x}\circ\mathbf{x})=c_0,
    \end{aligned}
\end{equation}
and the constraint (\ref{eq: MEP_sparse_c2}), which ensures that only one $V$ in selected from the set $\mathcal{V}$ and that $V$ is a column stochastic matrix, is taken care by the fact that we define $\bar{Q}:=(\bar{q}_{ij})$ (i.e., $p_{ij}(v_{ij}=0|x)$) as $1-Q$ (where $q_{ij}=p_{ij}(v_{ij}=1|x)$), and that $Q^{\top}\mathbf{1}_d =\mathbf{1}_k$, i.e., $Q$ is also column stochastic matrix. More precisely, the reformulation of the optimization problem (\ref{eq: MEP_sparse}) in terms of the tractable decision variables is
\begin{equation}
    \begin{aligned} \label{eq: MEP_sparse_Lin}
    \max_{\substack{\mathbf{x}\in\mathbb{R}^k, Q\in[0,1]^{d\times k}}} &\mathcal{H} \\
    \text{subject to }& \mathcal{D} = c_0, \quad Q^{\top}\mathbf{1}_d =\mathbf{1}_k.
    \end{aligned}
\end{equation} 
We consider the following augmented Lagrangian corresponding to the optimization problem (\ref{eq: MEP_sparse_Lin})
\begin{equation}
    \begin{aligned} \label{eq: Lag2}
    F_{T} =& \mathcal{H} - \frac{1}{T}(\mathcal{D}-c_0) - {\mu}^{\top} (Q^{\top}\mathbf{1}_d - \mathbf{1}_k)\\
    &\qquad- \frac{1}{2}\rho\|Q^{\top}\mathbf{1}_d-\mathbf{1}_k\|^2_2,
    \end{aligned}
\end{equation}
where $T$ and $\mu$ denote the Lagrange multipliers corresponding to the constraints in (\ref{eq: MEP_sparse_Lin}), $\|Q^{\top}\mathbf{1}_d-\mathbf{1}_k\|^2_2$ denotes the penalty term, and $\rho$ denotes the penalty parameter. Due to its close analogy to the MEP-based framework illustrated in \cite{rose1991deterministic}, we refer to $T$ as the temperature, and $F_T$ as the free-energy term. It is known from sensitivity analysis \cite{jaynes2003probability} that a large value of the Lagrange parameter $T$ corresponds to a large value of $c_0$. Similarly, a small value of $T$ corresponds to a small value of $c_0$. In our framework, we repeatedly solve (\ref{eq: MEP_sparse_Lin}) at decreasing values of $c_0$ by maximizing the Lagrangian $F_T$ at iteratively decreasing values of $T$. 

More precisely, let $T_k$ be the temperature value at the $k$-th iteration of the algorithm, the penalty parameter $\rho_k$ be equal to $T_k$, and the multiplier $\mu_k$ be given by the iteration 
\begin{align}\label{eq: MuIterate}
\mu_k=\mu_{k-1}+\rho_k(Q^{\top}\mathbf{1}_d-\mathbf{1}_k),
\end{align}
where $\mu_0=0$. We vary $T_k$ from a large value $(\rightarrow\infty)$ to a small value $(\approx 0)$. At large values of $T_k$, the Lagrangian $F_{T_k}$ is dominated by the convex entropy function $\mathcal{H}$ and the penalty parameter. As $T_k$ becomes small, the other terms including the non-convex $\mathcal{D}$ gets more weightage. As in other MEP-based frameworks, it is this homotopy from a convex function to the non-convex cost function $\mathcal{D}$ that helps the algorithm avoid getting stuck in a poor local minima. It is also known that given $T_{k+1}\geq T_k$, the above iterations converge to a local minima of the optimization problem (\ref{eq: MEP_sparse_Lin}) (see \cite{bertsekas2014constrained,hestenes1969multiplier,powell1969method} for details). Please see Algorithm \ref{alg: algorithm_SLR} for details on implementation.
\begin{algorithm}[t]
\caption{Maximum Entropy Sparsity-enforcing Regularization for Linear Regression}\label{alg: algorithm_SLR}
\begin{algorithmic}[1]
\State{\textbf{Input: }} $T_{\min}$, $T_{\max}$, $k$, $\beta<1$;  \State{\textbf{Output: }} $Q$ and $\mathbf{x}$.
\State{\textbf{Initialize:}} $T_k= T_{\max}$, $Q_0=[q_1,\hdots,q_k]$, $q_j=\frac{1}{d}  \mathbf{1}_d$, $k=1$, $\rho_0=\frac{1}{T_0}$.
\While{$T_k\geq T_{\min}$}
\State{\textbf{Obtain $Q_k$, $x_k$:}} Minimize $F_T$ in (\ref{eq: Lag2}) using a descent method and initial value $Q_{k-1}$, $x_{k-1}$.
\State $T_k \leftarrow \beta T_k$
\State $k\leftarrow k+1$.
\State Update $\mu_k$ in (\ref{eq: MuIterate}) and $\rho_k=\frac{1}{T_k}$.
\EndWhile
\end{algorithmic}
\end{algorithm}

{Convergence of $Q$ to a binary matrix :} As our objective is to solve the optimization problem (\ref{eq: Sparse_def3}), we want that as $T\rightarrow 0$, the matrix $Q$ converges to a binary matrix, i.e., $Q\rightarrow\{0,1\}^{d\times k}$. This in turn enables $p(V|\mathbf{x})\rightarrow \{0,1\}$, i.e., the soft decision variables converge to the binary solution required in (\ref{eq: Sparse_def3}). The structure of the optimal $Q$ is amenable to this desired aspect of the Algorithm \ref{alg: algorithm_SLR}. More precisely, by setting $\frac{\partial F_T}{\partial \mathbf{x}}=0$ and $\frac{\partial F_T}{\partial Q}=0$, we obtain
\begin{subequations}\label{eq: Sol_SLR}
\begin{align} \label{eq: x_SLR}
&\mathbf{x}=\Big[Q^{\top}\;A ^{\top}A \; Q + \text{diag}\big[\lambda^{\top}(Q\circ\bar{Q})\big]\Big]^{-1}Q^{\top}A ^{\top}\mathbf{y},\\
&Q=\frac{\exp(\frac{2}{T}H_m)\circ}{\exp(\frac{2}{T}H_m)+\exp(-\frac{2}{T}H_p)}, \label{eq: Q_SLR}
\end{align}
\end{subequations}
where $\circ$ denotes elementwise operation, 
\begin{subequations}
\begin{align}
&H_m=\min\Big\{\Xi,\mathbf{0}\Big\}, \quad H_p=\max\Big\{\Xi,\mathbf{0}\Big\}, \\
&\Xi = A^{\top}(\mathbf{y}-A Q\mathbf{x})\mathbf{x}^{\top} - \frac{1}{2}\lambda_a(\mathbf{x}\circ\mathbf{x})^{\top}\circ(1-2Q)\\ 
&\qquad -\frac{1}{2}\mathbf{1}_d\mu^{\top}-\frac{1}{2T}\mathbf{1}_d\mathbf{1}_d^{\top}Q+\frac{1}{2T}\mathbf{1}_d\mathbf{1}_k^{\top},
\end{align}
\end{subequations}
and $\lambda_a=[ a_1^{\top} a_1~~ \cdots ~~ a_d^{\top} a_d]$. Note that $Q$ in (\ref{eq: Q_SLR}) resembles a  Gibb's distribution, whose entries are identical at large values of $T (\rightarrow\infty)$, and converge to either $0$ or $1$ as $T\rightarrow 0$ (with $\Xi$ bounded); thus, achieving the above objective.

\section{Flexibility of the Framework in Modeling Constraints}\label{sec: Flexibility}

As briefly discussed in the Section \ref{sec: Introduction}, our proposed framework explicitly allows the control over the selection of the features in the design matrix $A$. More precisely, the $j$-th feature $ a_j$ is selected if and only if the sum of $j$-th row in the binary matrix $V$ is non-zero. More precisely, if the $j$-th feature is selected, then $\sum_{t=1}^k v_{jt} \geq 1$.
This attribute of the framework allows us to conveniently model several structural constraints in the design of the sparse vector $\mathbf{w}$, which otherwise are difficult to model in the existing literature when explicit control over the selection of the feature is not possible. Below we elucidate some of these scenarios.

{\em 1- Correlated feature vectors: } The columns of a given design matrix $A$ may be linearly dependent on each other, where the extent of their linear dependence is measured by the Pearson correlation coefficient \cite{cohen2009pearson}. For highly correlated set of features, it is desirable to have only one of the features to be selected, i.e., to have a non-zero entry in the sparse vector $\mathbf{w}$ only for one of such features, and to have zero values in $\mathbf{w}$ corresponding to all other features in this set. One straightforward solution is to drop all the features in this set, except the one that highly correlates with the output $\mathbf{y}$. However, such a methodology is sub-optimal. On the other hand, our proposed framework explicitly models this constraint in terms of the matrix $V$. In particular, let the $r$ features in the set $\{ a_{l_1}, a_{l_2},\hdots, a_{l_r}\}$ be highly correlated. Then the constraint $\sum_{t=1}^m v_{l_1t} + v_{l_2t} + \hdots +v_{l_rt} \leq 1,$
enforces that at most one of the above $r$-features is picked. To incorporate this in the optimization problem (\ref{eq: MEP_sparse_Lin}), we begin by re-writing it as
\begin{align}\label{eq: Const1Correlation2}
\sum_{V\in\mathcal{V}}\eta(V|x)\big(\sum_{t=1}^k v_{l_1t} + v_{l_2t} + \hdots +v_{l_rt}\big)\leq 1.
\end{align}
As done before, we replace the above $\eta(V|x)\in\{0,1\}$ with soft weights $p(V|x)\in[0,1]$, and dissociate them as in (\ref{eq: eta_break}). Subsequently, the algebraic manipulations (similar to that in (\ref{eq: Simplif_obj}) and (\ref{eq: Simplif_const})) result into a constraint in terms of the decision variable $Q=(q_{ij})$ as follows
\begin{align}\label{eq: Const1Correlation3}
\sum_{t=1}^k q_{l_1t} + q_{l_2t} + \hdots +q_{l_rt} \leq 1.
\end{align}
The above constraint can be incorporated easily into the optimization problem (\ref{eq: MEP_sparse_Lin}).

{\em 2- A priori knowledge: } Expert insights  play a substantial role in shaping the sparse solution $\mathbf{w}$, often suggesting the inclusion of at least one feature from each group when working with multiple groups of features. For instance, in the context of medical diagnosis, it is recommended to utilize data from distinct diagnostic groups, such as radiological imaging, clinical laboratory tests, and patient medical history, to ensure a comprehensive evaluation of a patient's condition. Our framework allows us to incorporate such a priori information into the existing problem. To elaborate, suppose there are $r$ features in the set $\{ a_{l_1}, a_{l_2},\ldots, a_{l_r}\}$ originating from the same group of features. In this context, the constraint $\sum_{t=1}^k v_{l_1t} + v_{l_2t} + \ldots +v_{l_rt} \geq 1$ serves to ensure that at least one of the $r$ features is selected. Similar to the previous scenario, it can be expressed in terms of the decision variable $Q$ as follows
\begin{align}\label{eq: Const2Correlation2}
\sum_{t=1}^k q_{l_1t} + q_{l_2t} + \hdots +q_{l_rt} \geq 1,
\end{align}

{\em 3- Grouping constraints: } In many applications, a group of features needs to be selected as a single unit. Algorithms such as group lasso addresses such instances by introducing a regularization term $\mathcal{T}(\mathbf{x})$ to the cost function in (\ref{eq: Sparse_def}). In the proposed framework, such group constraints can be easily modeled. For instance, let $\{ a_{l_1}, a_{l_2},\hdots, a_{l_r}\}$ be a group of features that occurs as a single unit, i.e., if one of them is picked then all of them should be picked. This constraint is modeled as
\begin{align}\label{eq: Const3Correlation}
v_{l_1t}=v_{l_2t}=\cdots=v_{l_rt}
\end{align}
As above, the constraint (\ref{eq: Const3Correlation}) in terms of the decision variable $Q$ is given by $q_{l_1t}=q_{l_2t}=\cdots=q_{l_rt}$.

Note that the above inequality and equality based constraints can be easily incorporated into (\ref{eq: MEP_sparse_Lin}), and the resulting optimization problem can be addressed using existing methods such as interior points algorithm \cite{byrd1999interior, coleman1996interior} and various other penalty methods \cite{bertsekas2014constrained}.

\begin{figure}
    \centering
    \includegraphics[width=0.9\columnwidth]{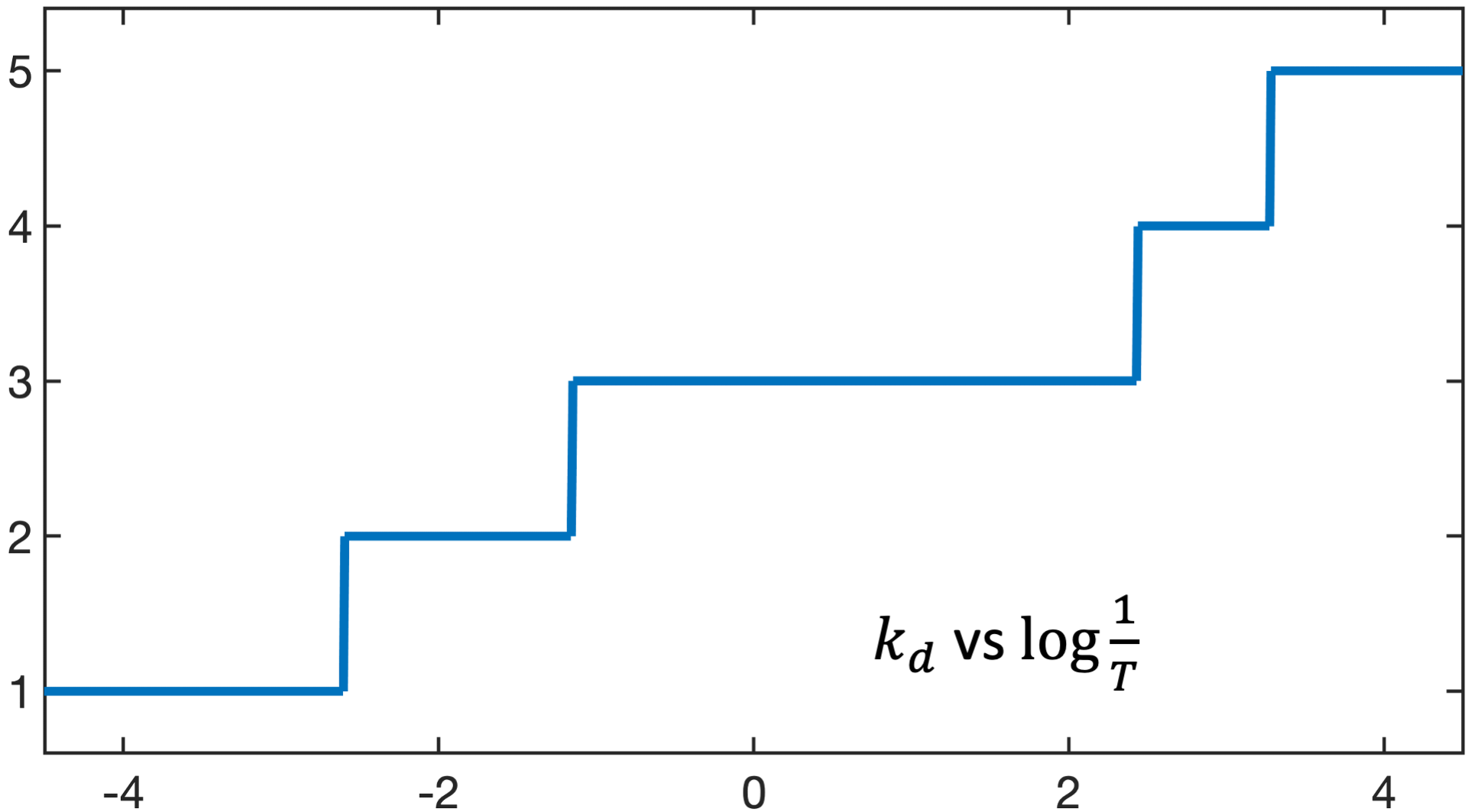}
    \caption{Phase Transition. Plot of $k_d$ versus $T$, where $k_d$ are the distinct number of non-zero values in $\mathbf{x}\in\mathbb{R}^k$ (or, equivalently the distinct columns in $Q\in[0,1]^{d\times k}$).}
    \label{fig: PT_figure}
\end{figure} 
\section{Phase Transition}\label{sec: PT_sparsity}
Algorithm \ref{alg: algorithm_SLR} is characterized by a unique trait wherein at large values of temperatures $T$ all non-zero values in $\mathbf{x}$ are {\em identical}; equivalently, the columns in $Q$ are identical. As $T$ decreases, there are specific instances at which the number of distinct non-zero values in $\mathbf{x}$ increase (equivalently, distinct columns in $Q$ increase). We refer to these instances as {\em phase transitions}, and the associated temperature values at which they occur as critical temperatures $T_{cr}$. Figure \ref{fig: PT_figure} illustrates this phenomenon on randomly generated data $\mathbf{y}\in\mathbb{R}^8$, $A\in\mathbb{R}^{8\times 15}$, $\mathbf{w}\in\mathbb{R}^{15}$, and $\|w\|_0=3$. For the purpose of illustration we set $k=5$ (though the {\em true} sparsity is $3$, it is not known a priori in general). The Algorithm \ref{alg: algorithm_SLR} begins with $k_d=1$ distinct non-zero value in $x\in\mathbb{R}^5$ at high temperatures (i.e., low values of $\log(1/T)$). As $T$ decreases, the number of distinct values remain unchanged for sometime before a critical temperature is reached, where $\mathbf{x}\in\mathbb{R}^5$ contains $k_d=2$ distinct values. This process continues, till the Algorithm \ref{alg: algorithm_SLR} determines $k_d=5$ distinct non-zero-values in $\mathbf{x}\in\mathbb{R}^5$ (equivalently, all columns in $Q$ are distinct).
 
It is not uncommon for MEP-based frameworks to exhibit such characteristics. Phase transitions have been observed in the various contexts such as data clustering \cite{rose1998deterministic}, Markov chain aggregation \cite{xu2014aggregation}, and facility location and path optimization \cite{srivastava2020simultaneous}. They have been shown to occur when the stationary point of the Lagrangian is no longer a minima. More precisely, let $F_T^*(Q)$ be the Lagrangian obtained after substituting $\mathbf{x}$ (\ref{eq: x_SLR}) in augmented Lagrangian $F_T$ in (\ref{eq: Lag2}). Then, in the current context phase transition occurs when the stationary point of the Lagrangian $F_T^*(Q)$ becomes a saddle point. This observation is crucial to explicitly quantifying the critical temperature $T_{cr}$ as instances where the Hessian $\mathcal{J}$ corresponding to the Lagrangian $F_T^*$ loses its positive definite property (i.e. local minimum becomes a saddle point). We use variational calculus to determine $T_{cr}$. Let $Q^*$ be the optimal to the Lagrangian $F_T^*$, and $\Psi = [\psi_1~\psi_2~\hdots~\psi_d]^{\top}\in\mathbb{R}^{d\times k}$ denote a feasible perturbation direction along which the Hessian $\mathcal{J}$ is not positive, i.e., $\mathcal{J}(Q,\Psi,T):= $  
\begin{equation}
\begin{aligned} \label{eq:  HessSLR}
\frac{d^2 F^*(Q+\epsilon\Psi)}{d\epsilon^2}\Big|_{\epsilon=0} &= \sum_{j=1}^k\Psi_j^{\top}\Big[\frac{T}{2}\Upsilon_{j}-\Gamma_j\Big]\Psi_j \\
    &\quad+\Big\|\sum_{j=1}^k E_jC\Psi_j\Big\|^2 \leq 0,
\end{aligned} 
\end{equation}
where the matrices in (\ref{eq: HessSLR}) are defined in the Appendix \ref{app: matrices}.
\begin{figure}
    \centering
    \includegraphics[width=0.95\columnwidth]{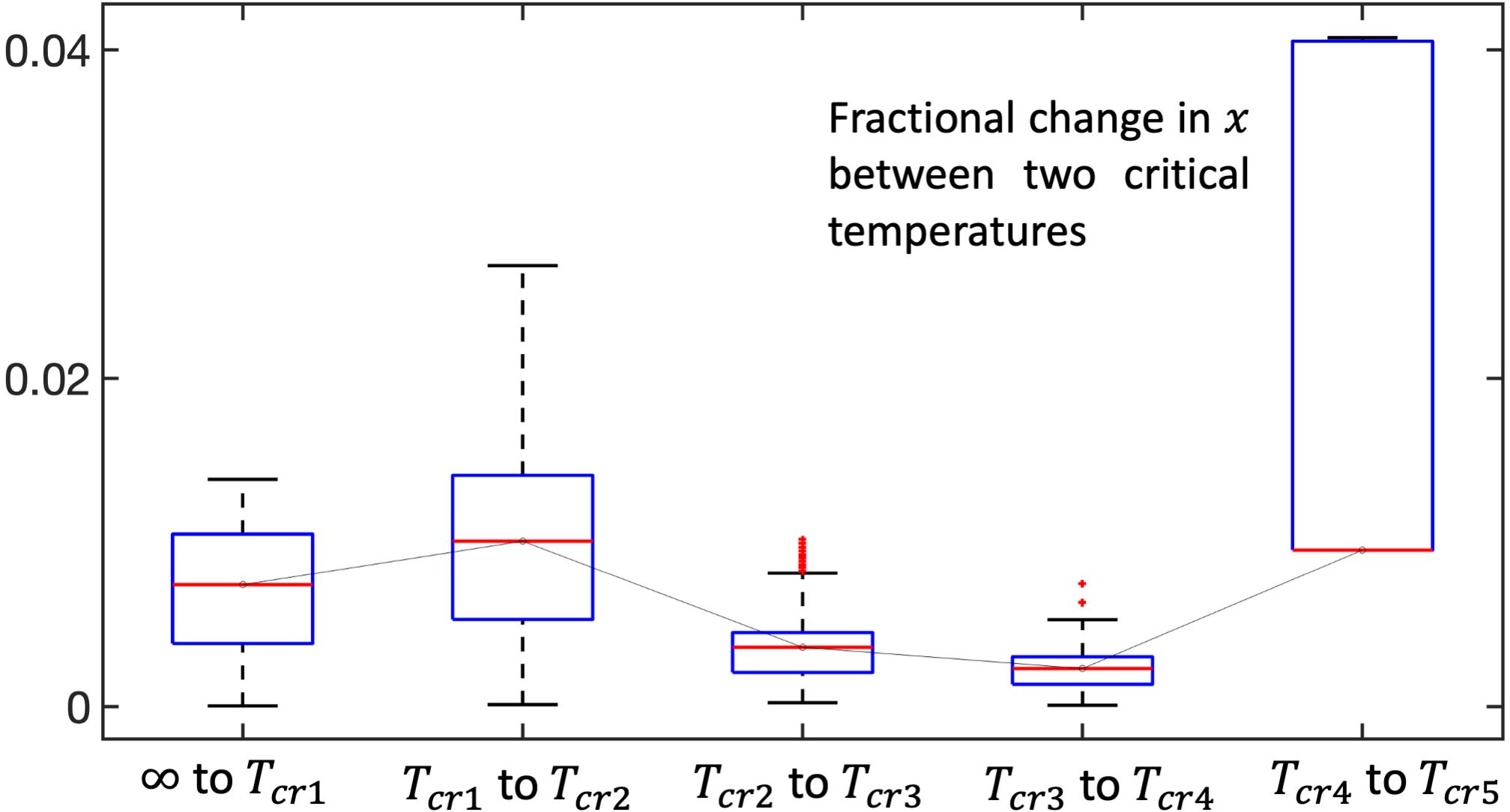}
    \caption{Plot of fractional change in $x$ in between two consecutive critical temperatures. The fractional change is computed as $\|\mathbf{x}(n) - \mathbf{x}_{\text{mean}}\|/\|\mathbf{x}_{\text{mean}}\|$, where $\mathbf{x}(n)$ denotes the $\mathbf{x}$ determined by Algorithm \ref{alg: algorithm_SLR} at the $n$-th iteration, and $\mathbf{x}_{\text{mean}}$ is the mean value of all the $\mathbf{x}(n)$'s occurring in between the corresponding two consecutive phase transitions.}
    \label{fig: PT_figure2}
\end{figure}
The following theorem derives the critical temperature $T_{cr}$ using the criteria presented in (\ref{eq: HessSLR}).
\begin{theorem}\label{th: PT_SLR}
The value of critical temperature $T_{cr}$ where the Hessian $\mathcal{J}(Q,\Psi,T)$ in (\ref{eq: HessSLR}) becomes nonpositive for a feasible perturbation direction $\Psi$, triggering phase transition in the solution from Algorithm \ref{alg: algorithm_SLR}, is determined by:
\begin{align}
&T_{cr}:=2\max_{1\leq j\leq k}[T_{cr,j}], \text{where }T_{cr,j}=2\lambda_{\max}\big(D_{\Delta,j}\big). 
\end{align}
Here, $D_{\Delta,j}\in\mathbb{R}^{(d-1)\times (d-1)}$ is referred to as the PT (phase transition) matrix corresponding to the $j$-th column $q_j$ of $Q$ (or, in other words the $j$-th entry $x_j$ in $\mathbf{x}$). As described in the Appendix \ref{app: Thm_SLR}, the PT matrix $D_{\Delta,j}$ depends on $Q$ and a constant matrix of rank $d-1$.
\end{theorem}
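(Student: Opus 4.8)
The plan is to work directly from the second-variation expression $\mathcal{J}(Q,\Psi,T)$ in (\ref{eq: HessSLR})---which is the Hessian of $F_T^*$ after $\mathbf{x}$ in (\ref{eq: x_SLR}) has been substituted---and to locate the largest temperature at which a positive-definite stationary point $Q^*$ first admits a feasible direction of nonpositive curvature. First I would pin down the feasibility structure of the admissible perturbations. Every iterate obeys the column-stochasticity constraint $Q^\top\mathbf{1}_d = \mathbf{1}_k$, which decouples into the $k$ independent conditions $\mathbf{1}_d^\top q_j = 1$; hence an admissible perturbation is exactly one whose columns satisfy $\mathbf{1}_d^\top\Psi_j = 0$ for every $j$. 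The feasible set is therefore the product $(\mathbf{1}_d^\perp)^k$, and on each factor the effective dimension drops from $d$ to $d-1$. This is precisely the role of the constant rank-$(d-1)$ matrix (an orthonormal basis / reduction matrix $R$ for $\mathbf{1}_d^\perp$) that enters $D_{\Delta,j}$ in Appendix \ref{app: Thm_SLR}.

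Next I would exploit the separable-plus-coupling structure of (\ref{eq: HessSLR}). The term $\sum_j \Psi_j^\top[\frac{T}{2}\Upsilon_j - \Gamma_j]\Psi_j$ is a sum of per-column quadratic forms over the product feasible set, while the remaining term $\|\sum_j E_j C\Psi_j\|^2$ is nonnegative and can only raise $\mathcal{J}$. I would argue that the earliest (largest-$T$) feasible direction of nonpositive curvature can be taken supported on a single column, so that the coupling term contributes no cross terms; detecting the onset then reduces to asking, one column at a time, when the reduced per-column block $M_j(T) := \frac{T}{2}\Upsilon_j - \Gamma_j$ fails to be positive definite on $\mathbf{1}_d^\perp$. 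This turns the full $dk$-dimensional definiteness question into $k$ independent $(d-1)\times(d-1)$ problems.

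Then, for each column $j$, I would restrict $M_j(T)$ to $\mathbf{1}_d^\perp$ via $R$ and read off the threshold. At large $T$ the term $\frac{T}{2}\Upsilon_j$ dominates and the reduced block is positive definite; as $T$ decreases, definiteness is lost exactly when the reduced pencil $\frac{T}{2}\Upsilon_j^{\mathrm{red}} - \Gamma_j^{\mathrm{red}}$ first becomes singular. Since removing the $\mathbf{1}_d$-null direction leaves $\Upsilon_j^{\mathrm{red}} \succ 0$, this is a well-posed symmetric generalized eigenvalue problem, and the largest admissible $T$ is governed by the top eigenvalue of the PT matrix $D_{\Delta,j}$ built from $\Upsilon_j^{\mathrm{red}}$, $\Gamma_j^{\mathrm{red}}$ (hence from $Q$) and $R$, giving $T_{cr,j} = 2\lambda_{\max}(D_{\Delta,j})$. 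Since the full Hessian ceases to be positive definite as soon as the first column does, I obtain $T_{cr} = 2\max_{1\leq j\leq k} T_{cr,j}$, with the overall factors tracking the $\frac{T}{2}$-normalization and the definition of $D_{\Delta,j}$.

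The main obstacle I anticipate is the rigorous treatment of the coupling term $\|\sum_j E_j C\Psi_j\|^2$: I must verify that it cannot move the onset to a strictly higher temperature than the column-wise thresholds predict---equivalently, that a first nonpositive-curvature direction may indeed be chosen on a single column (rendering the cross terms inert), and that whatever this term contributes to a single-column perturbation is folded consistently into $\Gamma_j^{\mathrm{red}}$, hence into $D_{\Delta,j}$. A secondary subtlety is confirming that $\Upsilon_j$ stays positive definite after the $\mathbf{1}_d$-restriction, so that the generalized eigenvalue problem is well posed and $\lambda_{\max}(D_{\Delta,j})$ is finite; this is exactly where the constant rank-$(d-1)$ reduction matrix does its work.
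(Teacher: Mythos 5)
Your overall route is the same as the paper's: parameterize feasible perturbations by a constant rank-$(d-1)$ matrix so that each column lives in $\mathbf{1}_d^\perp$, reduce the onset of indefiniteness to the per-column symmetric pencil $\frac{T}{2}C^\top\Upsilon_jC - C^\top\Gamma_jC$ (with $C^\top\Upsilon_jC\succ 0$ making the generalized eigenvalue problem well posed via simultaneous diagonalization), read off $T_{cr,j}=2\lambda_{\max}(D_{\Delta,j})$, and take the maximum over $j$. However, there is a genuine gap exactly at the point you flag as your ``main obstacle,'' and your proposed resolution does not work. You claim the first nonpositive-curvature direction can be taken supported on a single column so that the coupling term $\bigl\|\sum_j E_jC\Psi_j\bigr\|^2$ contributes ``no cross terms.'' But this term is not purely a cross term: for a perturbation supported on the single column $j$ it equals $\|E_jC\Psi_j\|^2$, which is generically strictly positive. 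Hence even when the per-column block $\frac{T}{2}\Upsilon_j-\Gamma_j$ first loses definiteness on $\mathbf{1}_d^\perp$, the full second variation $\mathcal{J}$ can remain positive along every single-column direction, and your argument does not establish that $T_{cr}$ is attained at the column-wise threshold rather than at some strictly lower temperature. Your fallback of ``folding'' the contribution into $\Gamma_j^{\mathrm{red}}$ would change the matrix whose top eigenvalue defines $T_{cr,j}$, which is not the claimed formula.

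The missing idea in the paper's proof is a multiplicity argument: at the critical temperature several columns of $Q$ coincide, $q_j=q_{\bar j}$ for all $j$ in an index set $I_{\bar j}$ with $|I_{\bar j}|\geq 2$, and consequently the matrices $E_j$ agree on that set. The paper then chooses the perturbation $\phi_j=\alpha_j\nu_{\bar j}$ for $j\in I_{\bar j}$ (and $\phi_j=0$ otherwise) with $\sum_{j\in I_{\bar j}}\alpha_j=0$ and $\alpha_j\neq 0$, where $\nu_{\bar j}$ is the null direction of the reduced pencil. This makes the coupling term telescope: $\sum_j E_jC\phi_j = E_{\bar j}C\nu_{\bar j}\sum_{j\in I_{\bar j}}\alpha_j = 0$, while the separable term still vanishes because each $\phi_j$ is proportional to the same null vector of the same block. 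Only with this construction does $I=0$ imply $\mathcal{J}=0$, which is what certifies that the column-wise threshold $2\lambda_{\max}(D_{\Delta,j})$ is actually the temperature at which the Hessian ceases to be positive definite. Without invoking the coincidence of columns at the transition, the equivalence ``$\mathcal{J}>0$ iff $I>0$'' --- the crux of the theorem --- is not established.
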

\begin{proof}
Please refer to the Appendix \ref{app: Thm_SLR}
\end{proof}

{\em Annealing schedule: } Phase transition plays a key role in designing the annealing schedule for the temperature $T$ in the Algorithm \ref{alg: algorithm_SLR}. We observe that between two consecutive critical temperatures the change in the vector of non-zero values $\mathbf{x}$ as determined by the Algorithm \ref{alg: algorithm_SLR} is small. Let $\Delta \mathbf{x}(n) = \|\mathbf{x}(n)-\mathbf{x}_{\text{mean}}\|/\|\mathbf{x}_{\text{mean}}\|$ denote the fractional change in $\mathbf{x}(n)$ --- the non-zero values determined by the Algorithm \ref{alg: algorithm_SLR} at temperature $T_n$ --- where $\mathbf{x}_{\text{mean}}$ is the average of all the non-zero value vectors determined by the Algorithm \ref{alg: algorithm_SLR} in between the two consecutive critical temperatures $T_{cr_{n_1}}$ and $T_{cr_{n_2}}$ such that $T_{cr_{n_1}} < T_n < T_{cr_{n_2}}$. Figure \ref{fig: PT_figure2} illustrates the boxplot of this fractional change in $\Delta\mathbf{x}(n)$ observed in between two consecutive $T_{cr}'s$ for the example considered earlier in this section. Note that $\Delta \mathbf{x}(n)$ is quite small and the medain roughly lies between $0\%$ to $1\%$; whereas the change observed at the phase transition is considerable, as it adds a distinct non-zero value in $\mathbf{x}$ (see Figure \ref{fig: PT_figure}). In particular, in Figure \ref{fig: PT_figure2}, we observe a change of $84\%$ at the first critical temperature $T_{cr1}$, $32\%$ at $T_{cr2}$, $8\%$ at $T_{cr3}$ and $9\%$ at $T_{cr4}$ in the non-zero values $\mathbf{x}$ determined by the Algorithm \ref{alg: algorithm_SLR}. Thus, the solution given by the algorithm undergoes a drastic change only at $T_{cr}$'s and largely remains unchanged in between any two consecutive $T_{cr}$'s.

The above characteristic of our MEP-based algorithm is significant to determining an appropriate schedule for the annealing parameter $T$. In particular, one approach is to  solve the optimization problem (\ref{eq: MEP_sparse_Lin}) only at the critical temperatures $T_{cr}$, where the $T_{cr}$'s can be analytically computed as in Theorem \ref{th: PT_SLR}. Another heuristic approach, to avoid computing $T_{cr}$, is to geometrically anneal temperature $T$ as in other MEP-based frameworks \cite{rose1998deterministic}.

\begin{remark}
Phase transitions are also shown to provide  insights into the choice of hyper-parameters in various optimization problems. For instance, \cite{srivastava2019persistence} demonstrate its utility in determining the true number of clusters in a given dataset, and \cite{srivastava2022choice} use it in determining the number of states in an aggregated Markov chain. Similarly, phase transitions can be exploited to determine the best choice of the sparsity in SLR problems. The underlying idea is that if a given number distinct entries in $\mathbf{x}$  persist for a long range of change in $T$ values, then it is a good estimate of the (underlying) true sparsity. For instance, the true sparsity $\|w\|_0=3$ for the example considered in Figure \ref{fig: PT_figure}. It is evident from figure that $k_d=3$ distinct entries in $\mathbf{x}$ is observed for a longer range of temperature values in comparison to other $k_d$ values. Thus, estimating the true sparsity correctly. 
\end{remark}

\section{Simulations and Results}\label{sec: Simulations}

In this section, we apply our methodology to a dataset containing 205 data points related to automobile features, as referenced in \cite{misc_automobile_10}. Notably, 195 of these records have complete information, and we carefully select 13 continuous features for sparse linear regression, with the automobile price as the model's output. These features encompass a wide range of attributes, including car dimensions, weight, engine specifications, and fuel efficiency. To enhance the quality of results, we normalize the dataset in a way that ensures each column has a 2-norm of 1.

Our primary aim is to develop a predictive model for automobile prices, with an emphasis on sparsity. This involves the selection of a small-sized subset of these 13 features, accompanied by their respective coefficients. In essence, our goal is to determine the values of the matrix $V$ and vector $\mathbf{x}$ in (\ref{eq: Sparse_def2}) while maintaining a predefined sparsity level indicated by $k$. Here, $V$ represents the selected features, and $\mathbf{x}$ signifies their corresponding coefficients. For the unconstrained scenario, we consider three instances of sparsity $k\in\{3,4,5\}$. The resultant $V$ matrices are visually represented in Figure \ref{fig: c12} (a.1, a.2, and a.3). As illustrated in Section \ref{sec: PF_sparsity}, a feature $ a_j$ is selected if the $j$-th row of $V$ sums up to a value $\geq 1$. Thus, as illustrated in the Figure, the selected features correspond to columns $\{ a_6,  a_9,  a_{12}\}$, $\{ a_6,  a_9,  a_{10},  a_{12}\}$, and $\{ a_{6}, a_{8}, a_{9}, a_{10}, a_{12}\}$, respectively for the above three sparsity levels. The corresponding cost function values for these degrees of sparsity are $0.2248$, $0.2211$, and $0.2165$.

To assess our method's accuracy and establish a benchmark, we conduct a comparative analysis, evaluating our results against Trimmed Lasso, as presented in \cite{bertsimas2017trimmed}, using both the alternating minimization and convex envelopes heuristics. This analysis employs the same dataset and consistent sparsity levels, ensuring a fair comparison. Trimmed Lasso has consistently demonstrated superior performance compared to various other Lasso variants, making it an ideal point of comparison for our method.

The cost values associated with the solutions are graphically illustrated in Figure \ref{fig: Losses}. Notably, across this dataset and for three distinct sparsity levels ($k$ values of 3, 4, and 5), our maximum entropy approach and both Trimmed Lasso heuristics exhibit quite similar performance. It's worth mentioning that our method holds a slight advantage in terms of accuracy. The features selected by both heuristics of Trimmed Lasso correspond to columns $\{a_6, a_{10}, a_{12}\}$, $\{a_6, a_9, a_{10}, a_{12}\}$, and $\{a_6, a_8, a_9, a_{10}, a_{12}\}$ respectively, for the three mentioned sparsity levels.

\captionsetup[figure]{font=small,labelfont=small}
\begin{figure}[t]
\centering
  \includegraphics[width=0.5\textwidth]{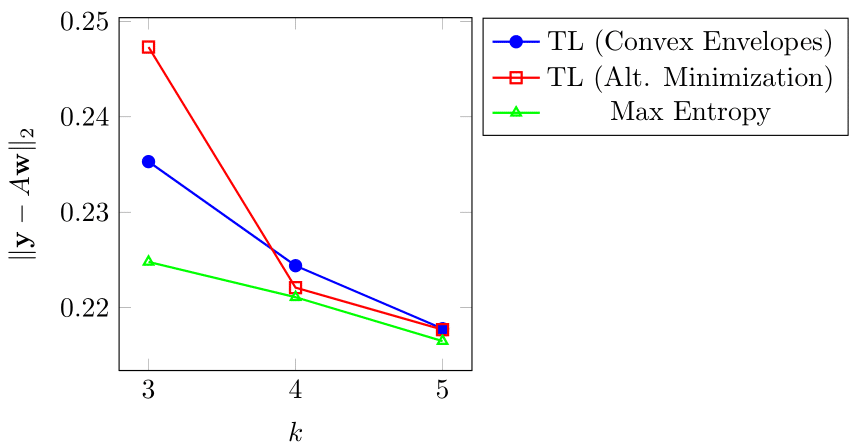}
  \caption{The figure illustrates the cost values associated with solutions obtained using the Alternating Minimization and Convex Envelopes heuristics of Trimmed Lasso, alongside our proposed Maximum-Entropy methodology.}
   \label{fig: Losses}
\end{figure}

A distinctive advantage of our approach, setting it apart from Trimmed Lasso, is its capacity to integrate diverse constraints, as discussed in Section \ref{sec: Flexibility}. In the rest of this section, we demonstrate the successful implementation of all described constraint types on the same dataset.

\captionsetup[figure]{font=small,labelfont=small}
\begin{figure*}[t]
\centering
  \includegraphics[width=0.58\textwidth]{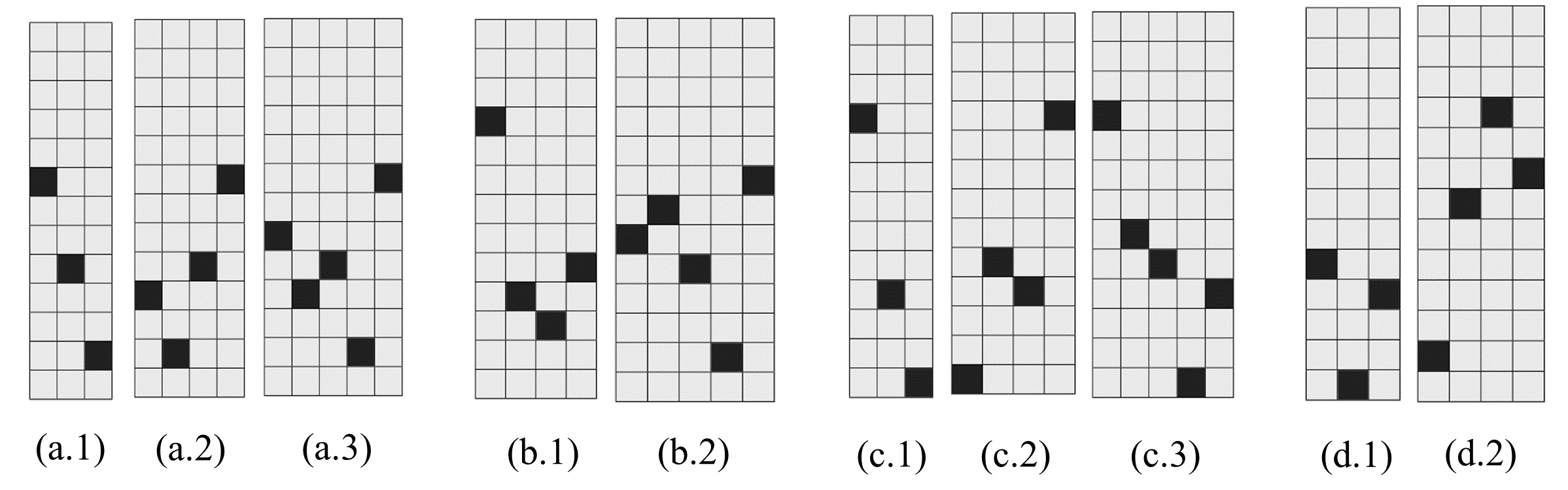}
  \caption{The figure exhibits the obtained $V$ matrices for different $k$ values (3, 4, and 5), encompassing various scenarios: unconstrained (a.1, a.2, a.3), correlated features removal (b.1, b.2), a priori knowledge imposition (c.1, c.2, c.3), and grouping constraints (d.1, d.2). In this representation, the color black denotes that the element has a value of one, while gray is used to indicate a value of zero.}
   \label{fig: c12}
\end{figure*}
\noindent
{\em 1- Correlated feature vectors: } We classify features as correlated if their absolute correlation coefficient exceeds 0.8. This criterion identifies the following sets of correlated features: $\{a_1, a_2, a_4\}$, $\{a_1, a_5\}$, $\{a_2, a_5\}$, $\{a_4, a_6\}$, $\{a_{10}, a_{12}, a_{13}\}$, $\{a_{10}, a_{6}\}$, and $\{a_{13}, a_4\}$.  As a result, it is advisable to choose, at most, one feature from each correlated set. As shown in Figure \ref{fig: c12}, for both $k=4$ and $k=5$ (a.2 and a.3), the unconstrained solution includes correlated features $\{a_6, a_{10}\}$ and $\{a_{10}, a_{12}\}$. Upon imposing the constraint (\ref{eq: Const1Correlation3}), the solution changes to selecting $\{a_4, a_9, a_{10}, a_{11}\}$ for $k=4$ and $\{a_6, a_7, a_8, a_9, a_{12}\}$ for $k=5$, as demonstrated in Figure \ref{fig: c12} (b.1 and b.2). The corresponding cost values are $0.2538$ and $0.2214$, respectively, which are (naturally) a bit larger than the unconstrained scenario illustrated above.

{\em 2- A priori knowledge: } Suppose we possess prior knowledge indicating that from various feature groups, we must include at least one feature. For example, in our dataset, it is essential to select at least one feature related to the vehicle's size, one related to its engine, and one associated with its fuel efficiency. In practical terms, this constraint implies that within each group of columns $\{a_1, a_2, a_3, a_4\}$, $\{a_6, a_7, a_8, a_9, a_{10}, a_{11}\}$, and $\{a_{12}, a_{13}\}$, a minimum of one column must be selected. The features selected under this constraint are depicted in Figure \ref{fig: c12} (c.1, c.2, and c.3), which correspond to $\{a_4, a_{10}, a_{13}\}$, $\{a_4, a_9, a_{10}, a_{13}\}$, and $\{a_4, a_8, a_9, a_{10}, a_{13}\}$ for sparsity levels of 3, 4, and 5, respectively. The respective cost values are $0.2657$, $0.2550$, and $0.2223$.

{\em 3- Grouping constraints:} Solely for demonstrative purposes, we have assumed that columns $\{a_6, a_7\}$ and $\{a_9, a_{10}\}$ are treated as unified groups. In other words, it's an all-or-nothing selection within each group. Illustrated in Figure \ref{fig: c12} (a.1 and a.2), the initial unconstrained solution contradicts the imposed constraint. After incorporating the grouping constraint, the newly selected features for $k=3$ and $k=4$ are $\{a_9, a_{10}, a_{13}\}$ and $\{a_4, a_6, a_7, a_{12}\}$, as depicted in Figure \ref{fig: c12} (d.1 and d.2) with cost values of $0.2655$ and $0.2268$, respectively.

\appendix
\section{Sparse Linear Regression}
\subsection{Algebraic Simplification}\label{app: Lin_obj_simplif}
\noindent
Algebraic simplification of the objective function (\ref{eq: MEP_sparse_obj})
\begin{equation*}
\begin{aligned}
    H &= -\sum_{V\in\mathcal{V}} p(V|\mathbf{x}) \log \; (p(V|\mathbf{x})) \\
    &= -\sum_{i,j = 1}^{d,k}\sum_{v_{ij} \in \{0,1\}}p_{ij}(v_{ij}|\mathbf{x})\log p_{ij}(v_{ij}|\mathbf{x}) \\ 
    &=-\sum_{i,j = 1}^{d,k}p_{ij}(v_{ij}=1|\mathbf{x})\log p_{ij}(v_{ij}=1|\mathbf{x}) \\
    &\qquad+p_{ij}(v_{ij}=0|\mathbf{x})\log p_{ij}(v_{ij}=0|\mathbf{x})\\
    & = - \; \mathbf{1}_d^{\top}[Q\circ\log Q+\bar{Q}\circ\log \bar{Q}]\mathbf{1}_k =: \mathcal{H}.
\end{aligned}
\end{equation*}
and constraint  (\ref{eq: MEP_sparse_c1})
\begin{equation}
    \begin{aligned}\label{eq: simplif_const}
        &D = \sum_{V\in\mathcal{V}}p(V|\mathbf{x})\|\mathbf{y}-A V\mathbf{x}\|_2^2 \\ 
        &=\sum_{V\in\mathcal{V}}\prod_{i,j=1}^{d,k}p_{ij}(v_{ij}|\mathbf{x})(\mathbf{y}-A V\mathbf{x})^{\top}(\mathbf{y}-A V\mathbf{x})\\
        &= \mathbf{y}^{\top}\mathbf{y} - 2\sum_{i,j=1}^{d,k}\mathbf{y}^{\top}  a_ix_j\sum_{V\in\mathcal{V}}p(V|\mathbf{x})v_{ij} + \\ 
        &\qquad \qquad \sum_{i,j,l,s=1}^{d,d,k,k}  a_i^{\top}  a_lx_jx_s\sum_{V\in\mathcal{V}} p(V|\mathbf{x})v_{ij}v_{ls}\\
        &= \mathbf{y}^{\top}\mathbf{y} - 2\sum_{i,j=1}^{d,k}y^{\top}   a_ix_j q_{ij} + \sum_{i,j=1}^{d,k}q_{ij}x_j{  a_i}^{\top} \sum_{l,s=1}^{d,k}q_{ls}x_s  a_l\\
        &\qquad \qquad
        + \sum_{i,j=1}^{d,k}(q_{ij}-q_{ij}^2)  a_k^{\top}  a_k x_i^2\\
        &=\|\mathbf{y}-A Q\mathbf{x}\|^2_2 + [  a_1^{\top} a_1~\cdots~ a_d^{\top} a_d][Q\circ{Q}](\mathbf{x}\circ\mathbf{x}) =: \mathcal{D}. \nonumber
    \end{aligned}
\end{equation}

\subsection{Proof of Theorem \ref{th: PT_SLR}}\label{app: Thm_SLR}
Note that $\Psi\in\mathbb{R}^{d\times k}$ is feasible perturbation if and only if $\mathbf{1}_d^{\top}\Psi=\mathbf{1}_k^{\top}$, i.e., its column sums up to $0$. In other words, its rank is $(d-1)$. Thus, we parameterize it as $\Psi=C\Phi$, where $\Phi=[\phi_1~\cdots~\phi_k]\in\mathbf{R}^{(d-1)\times k}$, and $C=\begin{bmatrix}\mathbf{I}_{(d-1)\times(d-1)}\\\mathbf{0}_{1\times (d-1)} \end{bmatrix} -\frac{1}{d}\mathbf{1}_{d\times(d-1)}$ is a rank $(d-1)$ matrix. The Hessian $ \mathcal{J}(Q,\Psi,T)$ in (\ref{eq: HessSLR}) is given by
\begin{equation}
\begin{aligned}
\mathcal{J}&= \underbrace{\sum_{j=1}^k\Phi_j^{\top}C^{\top}\Big[\frac{T}{2}\Upsilon_{j}-\Gamma_j\Big]C\Phi_j}_{I}+\underbrace{\Big\|\sum_{j=1}^k E_jC\Phi_j\Big\|^2}_{II} \nonumber
\end{aligned} 
\end{equation}
We now claim that $\mathcal{J}(Q,\Psi,T) > 0$ if and only if $I > 0$.\\
Note that the {\em if} direction is obvious since the second term II is non-negative. For the {\em only if} direction we will show that when $I =0$, we can determine a non-zero $\{{\Phi}_j\}$ such that $II=0$. Note that $I =0$ $\Rightarrow$ $T$ is such that 
\[
\Big|C^{\top}\frac{T\Upsilon_{j}}{2}C-C^{\top}\Gamma_jC\Big|=0
\]
Let $\bar{j}\in\{1,\hdots,k\}$ be chosen such that the above condition happens for largest value of $T$. In fact, there will be several coincident $x_j$ (equivalently, $q_j$) at this point to allow generation of distinct non-zero values in $\mathbf{x}$. Let $I_{\bar{j}}=\{j: q_j=q_{\bar{j}}, 1\leq j\leq k\}$. Let $\nu_{\bar{j}}$ be such that $\nu_{\bar{j}}^{\top}\Big(C^{\top}\frac{\Upsilon_{\bar{j}}}{2\beta}C-C^{\top}\Gamma_{\bar{j}}C\Big)\nu_{\bar{j}}=0$. Let
\[
\phi_{j}=
\begin{cases}
0,& \text{if }j\notin I_{\bar{j}}\\
\alpha_{j}\nu_{\bar{j}}, & \text{if } j\in I_{\bar{j}}
\end{cases}
\]
where $\sum_{j\in I_{\bar{j}}} \alpha_j = 0, \alpha_j\neq 0$.
In this case, 
\begin{align*}
\text{II} &=\Big\|\sum_j E_jC\alpha_j \nu_j\Big\|^2=\Big\|\sum_{j\in I_{\bar{j}}} E_jC\nu_j\alpha_j\Big\|^2\\
&=\Big\|E_{\bar{j}}C\nu_{\bar{j}}\sum_{j\in I_{\bar{j}}} \alpha_j\Big\|^2 = 0.
\end{align*}
Thus, $\mathcal{J}=0$ when $T$, $\nu_{j}$ satisfies $\nu_{j}^{\top}C^{\top}\Big[\frac{T}{2}\Upsilon_{b_{j}}-\Gamma_{j}\Big]C\nu_{j}=0$; this is possible only when 
\begin{align*}
\lambda_{\max}\Big[\frac{T}{2}C^{\top}\Upsilon_{j}C-C^{\top}\Gamma_{j}C\Big]=0.
\end{align*}
Note that $H_{0,j}\triangleq C^{\top}\Upsilon_{j}C>0$ and $H_{1,j} \triangleq C^{\top}\Gamma_{j}C=H_{1,j}^{\top}$. From Theorem 12.19 (Simultaneous Reduction to Diagonal Form) in \cite{laub2005matrix}, we have - there exists a non-singular matrix $G_j$ such that $G_j^{\top}H_{0,j}G_j = I$; $G_j^{\top}H_{1,j}G_j = D_{\Delta,j}$ where $D_{\Delta,j}$ is a diagonal matrix. Here $G_j =L_{\Delta,j}^{-\top}P_{\Delta,j}$ where $H_{0,j}=L_{\Delta,j}L_{\Delta,j}^{\top}$ (Choleski factorization of $H_{0,j}$) and $P_{\Delta,j}$ is such that $P_{\Delta,j}^{\top}[L_{\Delta,j}^{-1}H_{1,j}L_{\Delta,j}^{-\top}]P_{\Delta,j}=D_{\Delta,j}$ is a diagonal matrix. Such a $P_{\Delta,j}$ always exists since $H_{1,j}=H_{1,j}^{\top}$ [SVD of $L_{\Delta,j}^{-1}H_{1,j}L_{\Delta,j}^{-\top}$]. Let $\nu_{j} = G_j\bar{\nu}_{j}$
\begin{align*}
&\therefore \Rightarrow \bar{\nu}_{j}^{\top}\big[\frac{T}{2}G_j^{\top}H_{0,j}G_j - G_j^{\top}H_{1,j}G_j\big]\bar{\nu}_{j} = 0\\
&\Rightarrow \bar{\nu}_{j}^{\top}\big[\frac{T}{2}I - D_{\Delta,j}\big]\bar{\nu}_{j}= 0\\
&\Rightarrow \frac{T}{2}I - D_{\Delta,j} = 0 ~\Rightarrow T_{cr,j} = 2\lambda_{\max}\big(D_{\Delta,j}\big).
\end{align*}

\bibliographystyle{IEEEtran}
\bibliography{mybibfile}

\end{document}